\theoremstyle{definition}
\newtheorem{definition}{Definition}
\newtheorem{theorem}{Theorem}[section]
\begin{document}
\title[Generalization ... ElGamal cryptosystem]{A simple generalization of the
  {E}l{G}amal cryptosystem to non-abelian groups II}
\author{Ayan Mahalanobis}\address{Indian Institute of Science
  Education and Research Pune, Pashan Pune-411021, India}
\today
\email{ayanm@iiserpune.ac.in}
\keywords{MOR cryptosystem, special linear groups, the discrete
  logarithm problem}
\subjclass[2010]{94A60, 20G40}
\begin{abstract}
This is a study of  the MOR cryptosystem using the special linear
group over finite fields. The automorphism group of the
special linear group is analyzed for this purpose.
At our current state of knowledge, I show
that this MOR cryptosystem has better security than the ElGamal
cryptosystem over finite fields. 
\end{abstract}
\maketitle
\section{Introduction}
The MOR cryptosystem is a generalization of the ElGamal cryptosystem,
where the \emph{discrete logarithm problem} works in the automorphism group
of a group $G$, instead of the group $G$ itself. 
The framework for the
MOR cryptosystem was first proposed in Crypto2001 by Paeng et
al.~\cite{crypto2001}. Mahalanobis~\cite{ayan1} used the group of 
unitriangular matrices for the MOR cryptosystem. That effort was
successful, the MOR cryptosystem over the group of unitriangular
matrices over $\mathbb{F}_q$ is as secure as the ElGamal cryptosystem
over the finite field $\mathbb{F}_q$.  

In this paper we study the
MOR cryptosystem over SL$(d,q)$. If we assume, that the only way to
break the proposed MOR cryptosystem, is to solve the discrete logarithm
problem in the automorphism group; then it follows that the proposed
MOR cryptosystem is \textbf{as secure as} the ElGamal
cryptosystem over $\mathbb{F}_{q^d}$. 

This is a major improvement.
This MOR cryptosystem works with matrices of
degree $d$ over $\mathbb{F}_q$. To encrypt(decrypt) a
plaintext(ciphertext) one works over the field $\mathbb{F}_q$. To
break this cryptosystem, one has to solve a discrete logarithm problem
in $\mathbb{F}_{q^{d}}$. Even for a small positive integer $d$, this provides
us with a \textbf{considerable security} advantage. 

There are some challenges in the implementation of this
cryptosystem. Implementing 
matrix multiplication is hard. Though we have not reached the optimum
speed for that~\cite{don}, it might always stay harder than
multiplication in a finite field. So one needs to find an optimum
strategy to present the automorphisms and the underlying group for the
MOR cryptosystem, see Section 8 for more details. The key-size for
this MOR cryptosystem is big, compared with the conventional ElGamal cryptoystem.

At the end, I provide parameters for the proposed MOR
cryptosystem. I suspect that the parameters are too conservative and the
degree of the matrix is unnecessarily big. The overly conservative estimates are
to show that for 
chosen parameters, the MOR cryptosystem is almost as secure as the
ElGamal cryptosystem over elliptic curves using fields of same size; the golden standard in
public key cryptography. For most practical purposes,
the degree of the matrix can be chosen smaller. However the key-size
for this MOR cryptosystem is larger than that of the ElGamal over
elliptic curves.

\section{The MOR cryptosystem}
This section contains a bare-bone description of the MOR cryptosystem~\cite{crypto2001}, see also
\cite{paeng1}. A description and a critical analysis of the MOR cryptosystem is also in
\cite{ayan1} and the references there. 
\subsection{Description of the MOR cryptosystem}\label{MOR}
Let $G=\langle g_1,g_2,\ldots,g_\tau\rangle$, $\tau\in\mathbb{N}$
be a finite group and $\phi$ a non-trivial (public) automorphism of 
$G$. Alice's keys are as follows: 
\begin{description}\label{keyex}
\item[Private Key] $m$, $m\in\mathbb{N}$.
\item[Public Key] $\left\{\phi(g_i)\right\}_{i=1}^\tau$ and $\left\{\phi^m(g_i)\right\}_{i=1}^\tau$.
\end{description}
\paragraph{\textbf{Encryption}}
\begin{description}
\item[a] To send a message (plaintext) $a\in G$ Bob computes $\phi^r$
  and $\phi^{mr}$ for a random $r\in\mathbb{N}$.
\item[b] The ciphertext is $\left(\left\{\phi^r(g_i)\right\}_{i=1}^\tau,\phi^{mr}(a)\right)$.
\end{description}
\paragraph{\textbf{Decryption}}
\begin{description}
\item[a] Alice knows $m$, so if she receives the ciphertext
  $\left(\phi^r,\phi^{mr}(a)\right)$, she computes $\phi^{mr}$ from $\phi^r$ and
  then $\phi^{-mr}$ and then computes $a$ from $\phi^{mr}(a)$.
\end{description}
If Alice has the information
necessary to find the order of the automorphism $\phi$, then she can use
the identity $\phi^{t-1}=\phi^{-1}$ whenever $\phi^t=1$ to compute  $\phi^{-mr}$ . Also, she can
find the order of some subgroup in which $\phi$ belongs and
use the same identity. However, the smaller the subgroup, more efficient
the decryption algorithm.

\section{The unimodular group of degree $d$ over $\mathbb{F}_q$}
The  group SL$(d,q)$ is the set of all 
matrices of degree $d$ with determinant $1$. It is well known that
SL$(d,q)$ is a normal subgroup of GL$(d,q)$ the group of non-singular matrices of
degree $d$ over $\mathbb{F}_q$. In this article I consider
$\mathbb{F}_q$ to be a finite extension of the prime field
$\mathbb{Z}_p$ of degree $\gamma$ where $\gamma\geq 1$.
\begin{definition}
For distinct ordered pair $(i,j)$, define a matrix unit $e_{i,j}$ as
a matrix of degree $d$, such that, all 
entries in $e_{i,j}$ are 0, except the intersection of the i\textsuperscript{th} row 
and the j\textsuperscript{th} column; which is 1 (the identity in the
field $\mathbb{F}_q$). Matrices of the form
$1+\lambda e_{i,j}$, $\lambda\in\mathbb{F}_q^\times$ and $i\neq j$ 
are called the elementary matrices or elementary transvections. Here
$1$ is the identity matrix of degree $d$. I shall abuse the notation a
little bit and use $1$ for the identity of the field and the matrix
group simultaneously.
\end{definition}
It is known that the group SL$(d,q)$ is generated by elementary transvections~\cite[Theorem 8.8]{rotman}. The fundamental relations between the elementary
transvections are the relations in the field and the ones stated below:
\begin{align}\label{slrelations1}
&[1+\lambda e_{i,j},1+\mu e_{k,l}]=
\left\{
\begin{array}{ccc}
1+\lambda\mu e_{i,l} &\text{if}&j=k,\;\; i\neq l\\
1-\lambda\mu e_{k,j} &\text{if}&i=l,\;\; j\neq k\\ 
1          &\text{otherwise}&
\end{array}\right.
\end{align}
\begin{align}\label{slrelations2}
&\left(1+\lambda e_{i,j}\right)\left(1+\mu
  e_{i,j}\right)=1+\left(\lambda+\mu\right)e_{i,j}\\ 
&\left(1+\lambda e_{i,j}\right)^{-1}=\left(1-\lambda e_{i,j}\right)\\
&\left(1+\lambda e_{i,j}\right)^{k}=1+k\lambda e_{i,j}\;\;\;k\in\mathbb{N}
\end{align}
where $\lambda,\mu\in\mathbb{F}_q$.

\section{Automorphisms of the unimodular group over $\mathbb{F}_q$}
It is well known that the automorphisms of SL$(d,q)$ are generated by the following~\cite{carter,jean,steinberg}:
\begin{description}
\item[Diagonal Automorphism] This is conjugation by a non-scalar
  diagonal matrix. Notice that: since all diagonal matrices are not
  of determinant 1, the diagonal matrices are often not in
  SL$(d,q)$. So a diagonal automorphism is not always an inner automorphism.
\item[Inner Automorphism] This is the most well known automorphism of a
  non-abelian group $G$, defined by $x\mapsto g^{-1}xg$ for
  $g\in G$.
\item[Graph Automorphism] The graph automorphism induces the map
  $A\mapsto \left(A^{-1}\right)^T$, $A\in \text{SL}(d,q)$. Clearly, graph
  automorphisms are involutions, i.e., of order two and are not inner
  automorphisms. 
\item[Field Automorphism] This automorphisms is the action of a field
  automorphism of the underlying field to the individual entries of a
  matrix.  
\end{description}
In this section, I am interested in a special class of inner
automorphisms, ``the permutation automorphisms''. For a permutation
automorphism the conjugator $g$ in the inner automorphism
is a permutation matrix. It is well known that for a 
permutation matrix $P$, det$(P)=\pm 1$ and $P^{-1}=P^T$. The
permutation matrix is constructed by taking the identity matrix $1$
and then exchanging the rows based on some permutation $\alpha$. If the
permutation $\alpha$ is even then the determinant of $P$ is $1$ otherwise
it is $-1$. Note that if the determinant is $-1$, then conjugation by
that permutation matrix is not an inner automorphism; but it is close
to being one and I will treat it like an inner automorphism in this paper. 
\subsubsection{Effect of a permutation automorphism on an
  elementary transvections} 
If $A$ is an elementary transvection, i.e., $A=1+\lambda e_{i,j}$ and
$P$ be a permutation matrix, then $P^{-1}AP=1+\lambda
e_{\alpha^{-1}(i),\alpha^{-1}(j)}$. 
\subsubsection{Effect of a diagonal automorphism on an elementary
  transvection} Let $D=[w_1,w_2,\ldots,w_d]$ be a diagonal
matrix. If $A=1+\lambda e_{i,j}$ then $D^{-1}AD=1+(w_i^{-1}\lambda
w_j)e_{i,j}$. Let us fix a $(i,j)$ such that $1\leq i,j\leq d$, then
look at the \emph{root subgroup} $\langle 
1+\lambda e_{i,j}\rangle$, $\lambda\in\mathbb{F}_q$ and $i\neq j$. This subgroup is
clearly isomorphic to $\mathbb{F}_q^+$. 

Assume for a moment that I am using the MOR cryptosystem as
described in Section \ref{MOR} with $G$ as the root subgroup defined
above and $\phi$ as a diagonal automorphism. Then clearly for some $k\in\mathbb{F}_q^\times$.
\begin{eqnarray*}
\phi:&1+e_{i,j}\mapsto &1+ke_{i,j}\\
\phi^m:&1+e_{i,j}\mapsto &1+k^me_{i,j}. 
\end{eqnarray*}
Clearly we see that this MOR cryptosystem is equivalent to the ElGamal
cryptosystem over finite fields. Since SL$(d,q)$ is generated by
elementary transvections, I claim that using the diagonal
automorphisms of the special linear groups over finite fields, the
MOR cryptosystem is identical to the ElGamal cryptosystem over finite
fields. It is reasonable to assume that there are other automorphisms,
composition of which with the diagonal automorphisms will provide us
with better security.  
\subsubsection{The effect of the graph automorphism on an elementary
  transvection } It is easy to see from the definition of the graph
automorphism that if $A=1+\lambda e_{i,j}$ then
$\left(A^{-1}\right)^T=1-\lambda e_{j,i}$.
\subsubsection{The effect of field automorphisms on an elementary
  transvections}
It is well known that the field automorphisms form a cyclic group
generated by the Frobenius automorphism of the field $\mathbb{F}_q$, given by
$\lambda\mapsto\lambda^p$, where $p$ is the characteristic of the
field $\mathbb{F}_q$. Then the action of field automorphism on an
elementary transvection is $1+\lambda e_{i,j}\mapsto
1+\lambda^{p^s}e_{i,j}$ where $1\leq s<\gamma$.
\section{MOR with monomial automorphisms}
Assume for a moment that I am only using the composition of a diagonal
and an inner automorphism of SL$(d,q)$, i.e., I am using MOR (Section \ref{MOR})
where $\phi=\phi_1\circ\phi_2$ where $\phi_1$ is a diagonal
automorphism and $\phi_2$ is a permutation automorphism. Then clearly
$\phi$ is a monomial automorphism, conjugation by a monomial matrix.
The diagonal automorphism $\phi_1$ changes
$1+e_{i,j}$ to $1+\lambda_{i,j} e_{i,j}$ for some
$\lambda_{i,j}\in\mathbb{F}_q^\times$. Note that the $\lambda_{i,j}$
depends on the diagonal automorphism and once the diagonal
automorphism is fixed $\lambda_{i,j}$ is also fixed for a particular
$(i,j)$. The permutation automorphism $\phi_2$ changes 
$1+\lambda_{i,j} e_{i,j}$ to $1+\lambda_{i,j} e_{\beta(i),\beta(j)}$ where
$\beta=\alpha^{-1}$. Here $\alpha$ is the permutation that gives rise to
the permutation matrix $P$, used in the permutation automorphism.

I now look at the action of the exponentiation of the automorphism
$\phi=\phi_1\circ\phi_2$ on the elementary transvection
$1+e_{i,j}$. Notice that if 
\begin{equation}
\phi:
\begin{CD}
1+e_{i,j} @>\text{diagonal}>> 1+\lambda_{i,j}e_{i,j}
@>\text{permutation}>> 1+\lambda_{i,j}e_{\beta(i),\beta(j)}, 
\end{CD}
\end{equation} then 
\begin{equation}
\phi^m:
\begin{CD}
1+e_{i,j} @>>> 1+\prod\limits_{l=1}^m \lambda_{\beta^{l}(i)\beta^l(j)}e_{\beta^{m}(i),\beta^{m}(j)}
\end{CD}\end{equation}
Now let us assume that the order of $\beta$, $\circ(\beta)=\nu$ then 
\[\phi^\nu : 1+e_{i,j}\mapsto 1+\prod\limits_{l=1}^\nu
\lambda_{\beta^{l}(i)\beta^l(j)}e_{i,j}.\]
This shows that a cycle is formed and if $\nu<m$, then
this reduces the discrete logarithm problem in $\langle
\phi\rangle$ to a discrete logarithm problem in the finite field
$\mathbb{F}_q$. Though it is well known that in the symmetric group 
$S_n$, acting on $n$ points, one can get elements with very high
order. In our problem I am actually interested in the length of the
orbit formed by the action of a cyclic subgroup of $S_n$, generated by
$\beta$, on the set of distinct 
ordered pair of $\{1,2,\ldots,n\}$. It is known that these orbits are
quite small.

Since the permutation $\beta$ is easy to find from the public
information $\phi$ and $\phi^m$, unless the degree of the matrix
$d$ is astronomically big, we do not have any chance for a MOR
cryptosystem which is more secure than that of the ElGamal
cryptosystem over finite fields.

Since the conjugacy problem is easy in GL$(d,q)$, from the public
information of $\phi_1$ and $\phi_2$ one can compute the conjugator
monomial matrices for $\phi_1$ and $\phi_2$ modulo an element of the
center of GL$(d,q)$. I shall come back to this topic later (Section 7.2) in more details.    
\section{Structure of the automorphism group of SL$(d,q)$}
Let us start with a well known theorem describing the structure of the automorphism
group of SL$(d,q)$. Let $\mathcal{A}$ be the group of automorphisms
generated by the diagonal and the inner automorphisms and $\mathcal{B}$
be the group generated by the graph and the field automorphisms. Recall
that the center of the group GL$(d,q)$ is the set of all scalar
matrices $\lambda 1$ where $\lambda\in\mathbb{F}_q^\times$ and $1$ is
the identity matrix of degree $d$. I shall denote the center of
GL$(d,q)$ by $Z$ and \emph{the projective general linear group}
$\dfrac{\text{GL}(d,q)}{Z}$ by PGL$(d,q)$. 

A brief \emph{warning} about the notation. To increase readability
of the text, from now on, the image of $a$ under $f$ will be denoted
either by $a^f$ or by $f(a)$. Also, I denote the conjugation of
$X$ by $A$ as $X^A$. 

\begin{theorem}
The group $\mathcal{A}$ is isomorphic to PGL$(d,q)$ and
Aut$\left(\text{SL}(d,q)\right)$ is a semidirect product of
$\mathcal{A}$ with $\mathcal{B}$. 
\end{theorem}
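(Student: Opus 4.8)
The plan is to establish the two assertions separately, relying throughout on the classification of $\operatorname{Aut}(\mathrm{SL}(d,q))$ recalled above as being generated by diagonal, inner, graph, and field automorphisms. First I would show $\mathcal{A}\cong\mathrm{PGL}(d,q)$. The natural map here sends a matrix $A\in\mathrm{GL}(d,q)$ to the conjugation automorphism $c_A\colon X\mapsto A^{-1}XA$ of $\mathrm{SL}(d,q)$; since $\mathrm{SL}(d,q)$ is normal in $\mathrm{GL}(d,q)$ this is a well-defined homomorphism $\mathrm{GL}(d,q)\to\operatorname{Aut}(\mathrm{SL}(d,q))$, and its image is exactly $\mathcal{A}$, because diagonal automorphisms are conjugations by diagonal matrices, inner automorphisms of $\mathrm{SL}(d,q)$ are conjugations by elements of $\mathrm{SL}(d,q)$, and products of such conjugations are conjugations by products of the corresponding matrices, i.e.\ by arbitrary elements of $\mathrm{GL}(d,q)$ (using that every invertible matrix is a diagonal matrix times an $\mathrm{SL}$ matrix). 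So it remains to identify the kernel. I would argue that $c_A=\mathrm{id}$ iff $A$ centralizes $\mathrm{SL}(d,q)$; since $\mathrm{SL}(d,q)$ contains the elementary transvections, which act irreducibly on $\mathbb{F}_q^d$ (for $d\ge 2$), Schur's lemma forces $A$ to be scalar, so the kernel is $Z$. Hence $\mathcal{A}\cong\mathrm{GL}(d,q)/Z=\mathrm{PGL}(d,q)$.

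For the semidirect-product claim, I would first verify that $\mathcal{A}$ is normal in $\operatorname{Aut}(\mathrm{SL}(d,q))$. It suffices to check that conjugating a diagonal or inner automorphism by a graph automorphism or by a field automorphism again lands in $\mathcal{A}$: a field automorphism $\sigma$ satisfies $\sigma c_A\sigma^{-1}=c_{A^\sigma}$ where $A^\sigma$ is the entrywise image, still in $\mathrm{GL}(d,q)$; and the graph automorphism $\tau\colon X\mapsto (X^{-1})^T$ satisfies $\tau c_A\tau^{-1}=c_{(A^{-1})^T}$, again a conjugation. Since $\mathcal{A}$ is the set of all conjugation automorphisms, it is stable under conjugation by anything in $\operatorname{Aut}(\mathrm{SL}(d,q))$, so $\mathcal{A}\trianglelefteq\operatorname{Aut}(\mathrm{SL}(d,q))$. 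Next, $\mathcal{A}\cdot\mathcal{B}=\operatorname{Aut}(\mathrm{SL}(d,q))$ is immediate from the classification, since every automorphism is a product of a diagonal, an inner, a graph, and a field automorphism.

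The last thing to check — and the step I expect to be the main obstacle — is that $\mathcal{A}\cap\mathcal{B}=\{1\}$, so that the product is genuinely semidirect. One must rule out that a nontrivial composition of a graph and field automorphism coincides with some conjugation. The cleanest route is a trace/shape argument: a conjugation automorphism preserves the rank-one-plus-identity shape of an elementary transvection, sending $1+\lambda e_{i,j}$ to another transvection $1+\lambda' e_{i',j'}$ (possibly after rescaling $\lambda$), whereas the graph automorphism sends $1+\lambda e_{i,j}$ to $1-\lambda e_{j,i}$ composed with a transpose-type operation, and a nontrivial field automorphism alters $\lambda$ by a nontrivial power map; combining these with a conjugation and tracking the induced action on the "positive" versus "negative" root subgroups (or, equivalently, on the Dynkin diagram of type $A_{d-1}$ for $d\ge 3$) shows no nontrivial element of $\mathcal{B}$ is a conjugation. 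For $d=2$ the graph automorphism is itself inner, so one restricts attention to field automorphisms, and there a direct eigenvalue comparison on a transvection settles it. I would cite \cite{carter,steinberg} for the underlying structure facts and keep this verification brief, as the essential content is the order of the outer automorphism group, $|\mathcal{B}|=2\gamma$ (or $\gamma$ when $d=2$), matching $|\operatorname{Aut}|/|\mathrm{PGL}|$.
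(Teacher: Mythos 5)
Your proposal follows essentially the same route as the paper: the conjugation homomorphism $\mathrm{GL}(d,q)\to\operatorname{Aut}\left(\mathrm{SL}(d,q)\right)$ with kernel $Z$ gives $\mathcal{A}\cong\mathrm{PGL}(d,q)$ via the first isomorphism theorem, and the same conjugation computation $f c_A f^{-1}=c_{f(A)}$ establishes normality of $\mathcal{A}$ and $\mathcal{A}\mathcal{B}=\operatorname{Aut}\left(\mathrm{SL}(d,q)\right)$. The one place you go beyond the paper is in explicitly checking $\mathcal{A}\cap\mathcal{B}=\{1\}$ (and flagging the $d=2$ anomaly, where the graph automorphism is inner), a condition the paper's proof leaves unverified even though it is needed for the product to be semidirect.
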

\begin{proof}
From \cite[Theorem 2.12]{alperin} we know that any element in
GL$(d,q)$ is generated by the set consisting of all invertible
diagonal matrices and all transvections. Then we can define a map
$\digamma:\text{GL}(d,q)\rightarrow\mathcal{A}$ defined by
$\digamma(A)$ maps $X\mapsto X^A$, clearly $\digamma$ is an epimorphism and
Ker$(\digamma)=Z$. From first isomorphism theorem we have that
PGL$(d,q)\cong\mathcal{A}$.

We are left to show that
Aut$\left(\text{SL}(d,q)\right)$ is a semidirect product of
$\mathcal{A}$ with $\mathcal{B}$. To prove this we need to show
that $\mathcal{A}$ is a normal subgroup of
$\text{Aut}\left(\text{SL}(d,q)\right)$ and 
Aut$\left(\text{SL}(d,q)\right)=\mathcal{A}\mathcal{B}$. Notice that
any $f\in\mathcal{B}$ is an automorphism of GL$(d,q)$. With
this in mind we see that for $A\in\text{GL}(d,q)$ and
$X\in\text{SL}(d,q)$
\[X^{fAf^{-1}}=f\left(A^{-1}f^{-1}(X)A\right)=f(A)^{-1}Xf(A)=X^{f(A)}.\] 
This proves that $\mathcal{A}$ is a normal subgroup of
Aut$\left(\text{SL}(d,q)\right)$.    
Now notice that for any $f\in\mathcal{B}$,
$A^{-1}X^{f}A=\left((A^{-1})^{f^{-1}}XA^{f^{-1}}\right)^{f}$, where
$A\in\text{GL}(d,q)$. This proves that we can move elements of
$\mathcal{B}$ to the right of the product of automorphisms. This
proves our theorem.
\end{proof}
Now notice that the order of $\mathcal{A}$ is actually big, it is
$q^{\frac{d(d-1)}{2}}(q^d-1)\cdots(q-2)$ but the order of
$\mathcal{B}$ is small. The group $\mathcal{B}$ is the direct product
of the graph and field automorphisms. The order of $\mathcal{B}$ is
$2\gamma$, where $\gamma$ is the
degree for the extension $\mathbb{F}_q$ over the prime
subfield. Let $\gamma_1=2\gamma$.

Let $\phi$ and $\phi^m$ be as in Section \ref{MOR}, then from the
previous theorem $\phi=A\psi_1$
and $\phi^m=A^\prime\psi_2$, where $A,A^\prime\in\mathcal{A}$ and
$\psi_1,\psi_2\in\mathcal{B}$. I shall
consider $A\in\mathcal{A}$ as the conjugator as well, this is clearly
the case because $\mathcal{A}\cong\text{PGL}(d,q)$.

Now if $\phi=A\psi_1$, then $\phi^m=AA^{\psi_1}\cdots
A^{\psi_1^{m-2}}A^{\psi_1^{m-1}}\psi_1^{m}$. In this case\\ $AA^{\psi_1}\cdots
A^{\psi_1^{m-2}}A^{\psi_1^{m-1}}\in\mathcal{A}$ and $\psi_1^m\in\mathcal{B}$.

Now if $\gamma_1<m$ and since the order of $\psi_1$ divides
$\gamma_1$, there are $r_1$ and $r_2$ such that $m-1=k_1\gamma_1+r_1$,
where $0\leq r_1<\gamma_1$ and $r_2=m\mod\gamma_1$. Then
$AA^{\psi_1}\cdots A^{\psi_1^{m-1}}\psi_1^{m}=A_1^{k_1}AA^{\psi_1}\cdots
A^{\psi_1^{r_1}}\psi_1^{r_2}$, where $A_1=AA^{\psi_1}\cdots
A^{\psi_1^{\gamma_1-1}}$. 
From the information of $\phi$ and
$\phi^m$ we then have the information of $\psi_1$ and
$\psi_1^{r_2}$. For all practical purposes of implementing this
cryptosystem, the degree of the field extension cannot be too large,
in this case one can do an exhaustive search on the cosets of
$\mathcal{A}$ and find out $\psi_1$ and $\psi_1^{r_2}$ and do another
exhaustive search to solve the discrete logarithm problem in $\psi_1$
and find the $r_2$. The information of $r_2$ gives us a vital
information about the secret key $m$. This is clearly unacceptable. So
the only way out of this situation is not to use automorphisms from
$\mathcal{B}$.

Then for
$X\in\text{SL}(d,q)$ the automorphisms $\phi$ and $\phi^m$ as in
Section \ref{MOR} is given by 
\begin{eqnarray}
\label{m1}\phi(X)&=A^{-1}XA&\text{for some}\;\;A\in\text{GL}(d,q)\\
\label{m2}\phi^m(X)&=A^{\prime{-1}}XA^\prime&\text{for
  some}\;\;A^\prime\in\text{GL}(d,q) 
\end{eqnarray}
Now notice, in the description of the MOR protocol, we presented
the automorphisms as action on generators and furthermore a set of 
generators for SL$(d,q)$ are the elementary transvections.   

In this case from the public information of $\phi$ and $\phi^m$ one
can find a candidate for $A$ and $A^\prime$. This problem is known to be easy in
GL$(d,q)$ and is often
refereed to as \emph{the special conjugacy
  problem}~\cite{paeng1,crypto2001}. 
However, notice that $A$ and $A^\prime$
are not unique. For example, if $A$ and $A^\prime$ satisfy the above equations then
so will $Az$ and $A^\prime z^\prime$ for any $z,z^\prime\in Z$, see
Section~\ref{think1}.

We just saw that the only way to build a secure MOR cryptosystem using
SL$(d,q)$ is using automorphisms from $\mathcal{A}$.
Henceforth, whenever we are talking about the MOR
cryptosystem, we are using the group
SL$(d,q)$ and the automorphisms from $\mathcal{A}$.
\section{Security of the proposed MOR cryptosystem}
This paper is primarily focused on the discrete logarithm
problem in the automorphism group of a non-abelian group. There are
two kinds of attack on the discrete logarithm problem over finite fields. One
is the generic attack; this attack uses a \emph{black box} group
algorithm and the other is an \emph{index calculus} attack.

Since the black box group algorithms work in any group, they will
work in the automorphism group too, see~\cite[Theorem
1]{asiacrypt2004}. We have no way to prevent that. On the other hand,
these generic attacks are of exponential time complexity and so is of
the least concern.

The biggest computational threat to any cryptosystem using the
discrete logarithm problem is a subexponential attack like the index
calculus attack~\cite{oliver}. It is often argued~\cite{koblitz,joseph}
that there is no 
index calculus algorithm for most elliptic curve cryptosystems
that has subexponential time complexity. This fact is often
presented to promote elliptic curve cryptosystem over a finite
field cryptosystem~\cite{koblitz}. So, the best we can hope from the
present MOR cryptosystem 
is that there is no index calculus attack or the index calculus attack
becomes exponential.
\subsection{Inner automorphisms as matrices}
As it turns out the only way that a secure MOR cryptosystem might work
for the unimodular group is through conjugation of matrices, i.e.,
automorphisms from $\mathcal{A}$. This MOR cryptosystem can be seen as
working with inner automorphisms of GL$(d,q)$. It is well known that the inner
automorphisms work linearly on the $d^2$-dimensional algebra of
matrices of degree $d$ over $\mathbb{F}_q$. For a fixed
basis, any
linear operator on a vector space can be represented as a matrix. 
So, the discrete logarithm
problem on $\langle\phi\rangle$ (Section \ref{MOR}) is now reduced to
the discrete logarithm problem in GL$(d^2,q)$.
The question is, how
easy is it to solve this discrete logarithm problem?

The best algorithm for solving the discrete logarithm problem in
GL$(k,q)$ was given by Menezes and Wu~\cite{menezes1}. In this case, the authors show that for 
$X,Y\in\text{GL}(k,q)$, such that, $X^l=Y$, $l\in\mathbb{N}$; we can
solve the discrete logarithm problem, if $\chi(x)$ the
characteristic polynomial of $X$, factors into irreducible polynomials
of small degree. If the characteristic polynomial is irreducible then
the discrete logarithm problem in $\langle X\rangle$ reduces to the
discrete logarithm problem in $\mathbb{F}_{q^k}$. 

In our case we are working in GL$(d^2,q)$. So the characteristic
polynomial has degree $d^2$. It is easy to see that if the characteristic polynomial is irreducible then the extension of
the lowest degree in which the characteristic polynomial will split is
$\mathbb{F}_{q^{d^2}}$. However this is not the case, since
$\phi(1)=1$, $1$ is an eigenvalue of $\phi$ and so the best we can
hope for is $\mathbb{F}_{q{d^2-1}}$.

\subsubsection{Recovering the conjugator up to a scalar multiple}\label{think1}
Let $\phi(X)=A^{-1}XA$, where $A\in\text{GL}(d,q)$. Since $\phi$ is
linear, if we take $X=1+e_{ij}$, $i\neq j$; then
$\phi(X)=A^{-1}XA=1+A^{-1}e_{ij}A$.
Now if we look at $e_{ij}A$ closely, then $e_{ij}A$ is a matrix where
the $j$\textsuperscript{th} row of $A$ is the
$i$\textsuperscript{th} row of $e_{ij}A$, and the rest all zeros. Since $A$ is
non-singular, all the contents of any row can not be all zeros. From this
it follows that the matrix $A^{-1}e_{ij}A$ consists of $d$ columns,
each of which is a constant multiple of the $i$\textsuperscript{th}
column of $A^{-1}$. One of these columns must be nonzero. Now 
consider $A^{-1}\left(1+e_{i,i+1}\right)A -1$ for $i=1,2,\ldots,d-1$,
and let each $I_i$ be a corresponding nonzero column. Then construct
a $d\times d$ matrix, whose first $d-1$ columns are the columns
$I_1,I_2,\ldots,I_{d-1}$ and the $d$\textsuperscript{th} column being
a nonzero column of $A^{-1}\left(1+e_{n1}\right)A -1$. Then we end up
with a matrix $N=A^{-1}D$, where $D=[w_1,w_2,\ldots,w_d]$ is a
diagonal matrix. Since $N$ is known, we have found $A^{-1}$ up to a diagonal matrix.

It is obvious that $N^{-1}\phi(X)N=D^{-1}XD$ and hence
$N^{-1}\left(1+e_{ij}\right)N-1=w_i^{-1}w_je_{ij}$. Then by taking
$j=1,2,\ldots,d$ and $i=1$, we can find
$w_2^{-1}w_1,w_3^{-1}w_1,\ldots,w_d^{-1}w_1$, and form the diagonal
matrix $D^\prime=[1,w_1w_2^{-1},\quad\newline w_1w_3^{-1},\ldots,w_1w_d^{-1}]$. It is easy to
see now that $ND^\prime$ is $A^{-1}w_1$ and we have found $A$ up to a
scalar multiple.

It is not hard to convince oneself that once $A$ is found up to a
scalar multiple from $\phi$, in most cases the discrete logarithm problem in $\phi$
turns out to be a discrete logarithm problem in $A$. If one recovers
$Ac_1$ and $A^mc_2$ from $\phi$ and $\phi^m$, where $c_1,c_2\in\mathbb{F}_q^\times$, then
compute $(Ac_1)^{q-1}=A^{q-1}$ and $(A^mc_2)^{q-1}=A^{m(q-1)}$ and
solve the corresponding discrete logarithm problem. From Menezes-Wu~\cite{menezes1}
it is clear that this discrete logarithm problem can have a worst case
complexity of that of a discrete logarithm problem in
$\mathbb{F}_{q^d}$.

We can stop this attack by taking $A$ to be of order $q-1$. But, if the eigenvalues of $A$ are $\mu_1,\mu_2,\ldots,\mu_d$,
then the eigenvalues of $A^m$ are $\mu_1^m,\mu_2^m,\ldots,\mu_d^m$. On
the other hand the eigenvalues of $cA$ are
$c\mu_1,c\mu_2,\ldots,c\mu_d$, $c\in\mathbb{F}_q$. When one recovers
$c_1A$ and $c_2A^m$, $c_1,c_2\in\mathbb{F}_q$ one recovers
$c_1\mu_1,c_1\mu_2,\ldots,c_1\mu_d$ and
$c_2\mu_1^m,c_2\mu_2^m,\ldots,c_2\mu_d^m$. Then one can compute
$\dfrac{\mu_i}{\mu_j}$ and $\left(\dfrac{\mu_i}{\mu_j}\right)^m$, by
  taking quotients. Notice that these quotients belong to
  $\mathbb{F}_{q^d}$. However since there is no unique way to order the
  eigenvalues, one might not be able to match a quotient with its power. Then we might have to deal with several quotients to get to
  the right $m$. However, for all practical applications the size of
  the matrix $d$ is small and so this search is not going to cost much;
  on top of that one can do this in parallel. So it is resonable to
  claim at this stage that the discrete logarithm problem in $\phi$ is
  reduced to a discrete logarithm problem in $\mathbb{F}_{q^d}$.

The expected asymptotic complexity of the index calculus
algorithm in $\mathbb{F}_{q^k}$ is 
$\exp{\left((c+o(1))(\log{q}^k)^\frac{1}{3}(\log\log{q}^k)^\frac{2}{3}\right)}$
, where $c$ is a constant, see~\cite{oliver} and~\cite[Section 4]{koblitz}.
If the degree of the extension, $k$, is greater than
$\log^2{q}$ then the asymptotic time complexity of the index calculus
algorithm is exponential. In our 
case that means, if $d>\log^2{q}$ then the asymptotic complexity
of the index calculus algorithm becomes exponential.

If we choose $d\geq\log^2{q}$ then this MOR cryptosystems becomes as
secure as the ElGamal over the elliptic curve groups, because the index
calculus algorithm is exponential; otherwise we can not
guarantee. But on the other hand in the proposed MOR cryptosystem
encryption and decryption 
works on $\mathbb{F}_q$ and breaking the cryptosystem depends on
solving a discrete logarithm problem on $\mathbb{F}_{q^d}$. Since,
implementing the index calculus attack becomes harder as the field
gets bigger, it is clear that if we take 
$d\ll\log^2{q}$, then the MOR cryptosystem is much more secure than the
ElGamal cryptosystem over $\mathbb{F}_q$.

\section{Implementation of this MOR cryptosystem}
The cryptosystem we have in mind is the MOR cryptosystem (Section
\ref{MOR}), the non-abelian group is SL$(d,q)$ and the automorphisms
are the automorphisms from $\mathcal{A}$. In this implementation the
most important thing will be the presentation of $\phi$ and
$\phi^m$. We decided earlier that the presentation will be the action
of the automorphisms on a set of generators
$\{g_1,g_2,\ldots,g_\tau\}$. Now we can write $\phi(g_i)$ as a word
in the generators $g_1,g_2,\ldots,g_\tau$ or we can write the
product of the generators as a matrix. We choose the later,
there are two reasons for that:
\begin{description}
\item This will contain the growth in the length of the word,
  especially while computing the powers of $\phi$. That will stop any
  length based attack.
\item This will add to the diffusion.
\end{description}
The set of generators for SL$(d,q)$ that we have in mind is the
elementary transvections. It is easy to go back and forth as words in
elementary transvections and matrices using row reduction.

A big question is how to compute large powers of $\phi$ efficiently? This is not
the principal object of study for this paper and we will be brief on this
topic.   

Since a set of generators are elementary transvections, computing the
power of $\phi$ can be done using only words in elementary transvections and the
image of the automorphism on these elementary transvections. This can be done very
efficiently. However, we have decided to write $\phi^m(g_i)$ as
matrices. So, while computing the power of $\phi$, one might have to
go back and forth between words and matrices. The objective of this
exercise is to reduce the amount of matrix 
multiplication, in computing the power of $\phi$. 
Also, one can use the 
relations among the elementary transvections to shorten the length of
the word. There are quite a few options available.

We explore one such option in more details. Assume that we are computing
the $\phi^m$ using the \emph{square and multiply}
algorithm~\cite[Algorithm 5.5]{stinson}. 
In this algorithm one needs to multiply
two group elements, in our case it is composing two automorphisms.
So, we need to find out the worst-case
complexity for multiplying two automorphisms. I further assume that the
automorphism is given as the image of $\left(1+e_{i,j}\right)$, $i\neq
j$, $i,j\in\{1,2,\ldots,d\}$, the
image is one $d\times d$ matrix. So for sake of notational convenience
I assume that we are squaring $\phi$, where $\phi$ is given by the
action on elementary transvections. As is customary we assume that the
field addition is free and we count the number of field multiplications
necessary to do the computation. 

Let's start with the matrix $M$ such that
$M=\phi\left(1+e_{i,j}\right)$, I shall use
row operations to write $M$ as product of elementary
transvections. We count each row operation as $d$ field multiplications
and there are utmost $d^2$ row operation. So in the worst case after
$d^3$ many field multiplication we have written $M$ as a product of
elementary transvection. At most there are $d^2$ many elementary
transvections in the product\footnote{Some small examples computed by
  the author using GAP \cite{GAP4} suggests that in practice this
  number is much smaller.}. 

Using the relation in Equation \ref{slrelations2}, we split each
transvection into product of elementary transvections over the prime
subfield. So now there are $\gamma d^2$ elementary transvections over the
prime subfield, for each of which the image under $\phi$ is known. Then
the image under $\phi$ is computed and then and then there are
$(p-1)\gamma d^4$ elementary transvection. The question is how to compute the matrix
corresponding to that? I propose the following:

There are utmost $(p-1)\gamma d^4$ elementary transvections in the product of
$\phi(M)$. Make $d$ equally spaced partition of the product of $\phi(M)$. Then each one of
these partitions can have utmost $(p-1)\gamma d^3$ elementary transvections. Now
we multiply the $(p-1)\gamma d^3$ elementary transvections to get $d$ many matrices
and them multiply these $d$ many matrices to get the final matrix
corresponding to $\phi^2\left(1+e_{i,j}\right)$. Now we multiply the
$(p-1)\gamma d^3$ elementary transvections linearly, one after the other, and use
the relations in Equations \ref{slrelations1} and \ref{slrelations2}. Notice that one of the components in
this multiplication is an elementary transvection. So every
multiplication can take utmost $d$ many field multiplication. So the
total complexity of multiplying $(p-1)\gamma d^3$ many elementary transvections is
$(p-1)\gamma d^4$. Since different partitions can be multiplied in parallel, we
assume that the worst-case complexity is $(p-1)\gamma d^4$ field multiplications.

Now we have to multiply the $d$ many matrices thus obtained. We
assume that we use a straight line program to compute the
product. Assuming that matrix multiplication can be done in $d^3$
field multiplication, we see that this also requires $d^4$ field
multiplications. Since we can compute $\phi^2\left(1+e_{i,j}\right)$ in
parallel for different $i$ and $j$, we claim that we can multiply
two automorphisms with worst-case complexity 
$(p-1)\gamma d^4+d^4$ field multiplications.
   
\subsection{Parameters for the cryptosystem}
We realized that if the conjugator $A$ in $\phi$ (Equation \ref{m1})
is a monomial matrix 
then that prevents the formation of a discrete logarithm problem in
the $\lambda$ of an elementary transvection $1+\lambda e_{i,j}$. We need
the inner automorphism so that the attack due to small cycle size of
the permutation in the monomial matrix can be avoided. So we have to
take the automorphism as conjugation by
$A\in\text{GL}(d,q)$. 

The size of $d$ and $q$ is an open question. With the limited amount of
knowledge we have about this cryptosystem, we can only make a
preliminary attempt to encourage further research. The current
standard for security in the public key cryptography is 80-bit
security. This means that the best known attack to the cryptosystem
should take at least $2^{80}$ steps.

The best known attack on the discrete logarithm problem in the
matrices $A$ and $A^\prime$ (Equations \ref{m1} and \ref{m2}) is
the generic \emph{square root} 
attack. So we have to ensure that to find $m$ from $A$ and $A^\prime$
one needs at least $2^{80}$ steps. For an attack algorithm we assume that computing in
$\mathbb{F}_q$ and in GL$(d,q)$ takes the same amount of time. If we
assume that the order of the matrix $A$ is the same as the order of
the field\footnote{The size of the field is
  motivated by the use of similar field in elliptic curve
  cryptography. For elliptic curves, the choice depends on the fact
  that the size of the 
  group of rational points on  an elliptic curve is roughly the size
  of the field. In our case, there are matrices of high order in
  GL$(d,q)$. So the field can be chosen smaller,
  depending on the matrix we choose to use.}, then the order of the
field should be around $2^{160}$.
So there are two 
choices for $q$, take $q$ to be a prime of the order
$2^{160}$, i.e., a 160 bit prime; or take $\mathbb{F}_q=\mathbb{F}_{2^{160}}$. 

A similar situation arises with the discrete logarithm problem over
the group of an elliptic curve over a finite field. The MOV attack reduces the discrete logarithm problem in the
group of the elliptic curve over $\mathbb{F}_q$ to a discrete logarithm problem in
$\mathbb{F}_{q^{k}}^{\times}$ for some positive integer $k$. This is of concern
in the implementation of the elliptic curve cryptosystem, because if
$k$ is too small then there is an subexponential attack on the
elliptic curve discrete logarithm problem. On the other hand, the size
of the elliptic curve group is almost as big as the field. To prevent the square root
attack the size of the field has to be considerably higher. Once you
assume that the field is of appropriate size $(2^{160})$, small $k$
provides adequate security. Our case is quite
similar. 

Koblitz et al.~\cite[Section 5.2]{koblitz} mentions that in practice
$k\approx 20$ is enough for security. If we buy
their argument, then it would seem that one can choose $d$ to be a
around $20$. We suspect that one might be able to go even
smaller. In our MOR cryptosystem, Menezes-Wu algorithm reduces the
discrete logarithm problem in $\mathbb{F}_{q^d}$.

So we propose $d=19$, and $q$ is as described earlier. Then we see
that if $q=2^{160}$, then we are talking about a discrete
logarithm problem in $\mathbb{F}_{2^{3040}}$. This
clearly surpasses every standard for discrete logarithm problem over
finite fields. At this size of the field, it does not matter if the
index-calculus is exponential or sub-exponential. It is simply not doable.    
\subsection{Generators for the cryptosystem}
The question I raise in this section is, are their better
generators than the elementary transvections in SL$(d,q)$? We saw that if we use
the elementary transvections for a prime field, then one needs
$(d^2-d)$ elementary transvections and $(d^2-d)\gamma$ elementary
transvections for $\mathbb{F}_q$ where $q=p^{\gamma}$. 

This is one of the major problems in the implementation of this
cryptosystem. 
We now try to solve this problem for SL$(d,p)$, where $p$ is a prime. In this MOR cryptosystem
(Section \ref{MOR}), generators play a major role. There are some
properties of the generators that help. Two of them are:
\begin{description}
\item[i] There should be an efficient algorithm to solve the word problem
  in these generators.
\item[ii] The less the number of generators of the group, the better is the
  cryptosystem.
\end{description}

Albert and Thompson~\cite{twogenerator} provides us with two
generators for SL$(d,q)$. They are
\begin{eqnarray*}
&\text{C}=1+\alpha e_{d-1,2}+e_{d,1}\\
&\text{D}=(-1)^d\left(e_{1,2}-e_{2,3}+\sum\limits_{i=3}^de_{i,i+1}\right)
\end{eqnarray*} 
where $\alpha$ is a primitive element of $\mathbb{F}_q$. It is clear
from the proof of~\cite[Lemma 1]{twogenerator} that to solve the word
problem in these generators one has to solve the discrete logarithm
problem in $\mathbb{F}_q$. This is clearly not useful for our cause. So
we adapt the generators and extend it to show that for
these generators one can compute the elementary transvections. Since
the number of generators is $2$, this gives us an
advantage for the presentation of the public key and the
ciphertext over elementary transvections. However, I know of no efficient algorithm to
solve the word problem in these generators. If we can find one such
algorithm then it can
be argued that this cryptosystem would become more economical(efficient).     

I now prove a theorem which is an adaptation of~\cite[Lemma
1]{twogenerator}. I use the convention used by Albert and Thomson,
\[e_{i,j}=e_{d+i,j}=e_{i,d+j}.\]
The proof of this lemma is practically identical with the proof
of~\cite[Lemma 1]{twogenerator}. I include a short proof for the
convenience of the 
reader and some of the formulas we produce in the proof are useful for
implementation.
\begin{theorem}
Let $$C=1+e_{d-1,2}+e_{d,1}\;\;\; \text{and}\;\;\;
D=(-1)^d\left(e_{1,2}-e_{2,3}+\sum\limits_{i=3}^de_{i,i+1}\right)$$ be
elements of SL$(d,p)$ where $d\geq 5$. Then $C$ and $D$ generates SL$(d,p)$. 
\end{theorem}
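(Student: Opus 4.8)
The plan is to show that the subgroup $H:=\langle C,D\rangle$ contains every elementary transvection $1+e_{i,j}$ with $i\neq j$. This is enough, because $\mathrm{SL}(d,p)$ is generated by its elementary transvections \cite[Theorem 8.8]{rotman}, and over the prime field $1+\lambda e_{i,j}=(1+e_{i,j})^{\lambda}$ for each $\lambda\in\mathbb{F}_p^{\times}$ (using $(1+e_{i,j})^{k}=1+ke_{i,j}$), so the coefficient $1$ in $C$ — rather than a primitive element, as in \cite[Lemma 1]{twogenerator} — costs nothing. All indices below are taken modulo $d$, following the convention $e_{i,j}=e_{i+d,j}=e_{i,j+d}$.

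First I would record the role of $D$. The matrix $D$ is monomial: every row and every column carries exactly one nonzero entry, of value $\pm1$, and its underlying permutation is a single $d$-cycle. Hence conjugation by a power $D^{k}$ carries $1+\lambda e_{i,j}$ to $1+\varepsilon\lambda\, e_{i+t,\,j+t}$ for a suitable sign $\varepsilon=\pm1$, where $t$ depends only on $k$ and ranges over all residues mod $d$; over $\mathbb{F}_p$ the sign is harmless since $1-\mu e_{i,j}=(1+\mu e_{i,j})^{-1}$. Also $C=(1+e_{d-1,2})(1+e_{d,1})$, a product of two commuting transvections (the cross terms vanish because $d\neq2$). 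Conjugating $C$ by the powers of $D$ thus places into $H$, for every residue $a$, an element $C_a=F_aN_a$ with $F_a=1+\varepsilon_a e_{a-1,\,a+2}$ a transvection of ``difference $3$'', $N_a=1+\eta_a e_{a,\,a+1}$ a transvection of ``difference $1$'', $\varepsilon_a,\eta_a=\pm1$, and $F_a,N_a$ commuting.

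The core step is to peel a single short transvection off the $C_a$. Using the commutator rules (\ref{slrelations1}), I would check that among the four pairs of factors of $C_a=F_aN_a$ and $C_{a+1}=F_{a+1}N_{a+1}$ the only non-commuting pair is $\{N_a,N_{a+1}\}$; this is exactly where $d\geq5$ is needed, since for $d=4$ the pair $\{F_a,F_{a+1}\}$ would also fail to commute and contaminate the computation. Therefore $[C_a,C_{a+1}]=[N_a,N_{a+1}]=1\pm e_{a,\,a+2}$, so $1+e_{a,a+2}\in H$ for every $a$. Iterating $[1+e_{i,j},1+e_{j,k}]=1+e_{i,k}$ on these ``difference $2$'' transvections yields $1+e_{i,j}$ whenever $j-i$ is even and nonzero. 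If $d$ is odd, $2$ is invertible mod $d$ and this already exhausts all transvections. If $d$ is even, I would extract one more, odd-difference transvection: commuting $C_a$ with the already-available $1+e_{a+2,\,a+4}$, only $F_a$ interacts and gives $1+e_{a-1,\,a+4}$, of ``difference $5$'' (this uses $d\neq5$, which holds since $d$ is even); then successive applications of the commutator identity using even-difference transvections lower the difference $5\to3\to1$, and ``difference $1$'' transvections together with all even ones give every $1+e_{i,j}$, $i\neq j$. This completes the proof via \cite[Theorem 8.8]{rotman}.

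I expect the only real obstacle to be the bookkeeping in the core step: one must verify the commutator relations with indices modulo $d$, take care to commute \emph{products} of transvections rather than single ones, confirm that no accidental interaction occurs once $d\geq5$, and treat the even and odd cases of $d$ separately. None of this is deep, but it is precisely where the hypotheses $d\geq5$ and ``$p$ prime'' are spent.
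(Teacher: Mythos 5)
Your proof is correct, and it shares its skeleton with the paper's: both conjugate $C$ by powers of $D$ to obtain the shifted copies $C_k=D^{-k}CD^{k}=1-e_{k-1,k+2}-e_{k,k+1}$, and both begin by extracting a single transvection from a commutator of two consecutive copies (the paper's $CC_1C^{-1}C_1^{-1}=1+e_{d,2}$ is precisely your $[C_a,C_{a+1}]=[N_a,N_{a+1}]$ at $a=d$). After that the routes genuinely diverge. The paper walks along the bottom row: its key identity is $\left(1+e_{d,k}\right)C_{k}\left(1-e_{d,k}\right)C_{k}^{-1}=1-e_{d,k+1}$, a commutator of an already-captured transvection with the whole product $C_k$, which by induction yields $1+e_{d,k}$ for $k=2,\ldots,d-1$; it then obtains $1+e_{2,1}$ by conjugating with $D^2$, gets $1+e_{d,1}$ from one more commutator, and finally invokes the relations (\ref{slrelations1}) to generate everything from row $d$. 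You instead harvest the entire translation orbit $1+e_{a,a+2}$ for all $a$ at once and then never touch $C$ or $D$ again, doing pure ``difference arithmetic'' with single matrix units, at the cost of a parity split on $d$ (the odd case closes immediately since $2$ generates $\mathbb{Z}_d$; the even case needs the extra difference-$5$ extraction from $F_a$). Your version is arguably easier to audit, since after the first step every commutator involves only elementary transvections, and it makes explicit where $d\geq 5$ is spent, which the paper leaves implicit; the paper's version avoids the case split and delivers the row-$d$ transvections in the form used in its later implementation remarks. I checked your commutation claims ($F_a$ with $F_{a+1}$ fails only when $d$ divides $4$, $F_a$ with $N_{a\pm 1}$ only when $d$ divides $3$, so for $d\geq 5$ only $\{N_a,N_{a+1}\}$ interacts) and the even-$d$ descent $5\to 3\to 1$; they hold as you assert.
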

\begin{proof}
Let $G_0$ be the subgroup of SL$(d,p)$ generated by $C$ and $D$. I
will now write down a few formulas, which follow from direct
computation. For $2\leq k\leq d-2$ we have
\begin{eqnarray}
&D^{-1}=&(-1)^d\left(e_{2,1}-e_{3,2}+\sum\limits_{i=3}^de_{i+1,i}\right)\\
&C_1=&D^{-1}CD=1-e_{d,3}+e_{1,2}
\end{eqnarray}
\begin{eqnarray}
&CC_1C^{-1}C_1^{-1}=&1+e_{d,2}
\end{eqnarray}
\begin{eqnarray}
D^{k}=(-1)^{dk}\left(-e_{1,1+k}-e_{2,2+k}+\sum\limits_{i=3}^de_{i,i+k}\right)\\
D^{-k}=(-1)^{dk}\left(-e_{1+k,1}-e_{2+k,2}+\sum\limits_{i=3}^de_{i+k,i}\right)
\end{eqnarray}
\begin{eqnarray}
&C_{k}=D^{-k}CD^{k}=1-e_{k-1,k+2}-e_{k,k+1}\\
&C_{k}^{-1}=1+e_{k-1,k+2}+e_{k,k+1}\\
&\left(1+e_{d,k}\right)C_{k}\left(1-e_{d,k}\right)C_{k}^{-1}=1-e_{d,k+1}
\end{eqnarray}
From Equation (11) we see that $1+e_{d,2}$ belongs to $G_0$ and
then we use mathematical induction on $k$ and Equation (16) proves that $1+e_{d,k}\in G_0$ for
$k=2,\ldots,d-1$. Also $D^{-2}\left(1+e_{d,d-1}\right)D^2=1+e_{2,1}\in
G_0$. Furthermore $\left[1+e_{d,2},1+e_{2,1}\right]=1+e_{d,1}$. This
proves that $1+e_{d,k}\in G_0$ for $k=1,2,\ldots,d-1$. Then we can use the relations in SL$(d,p)$ to
prove that $1+e_{i,j}\in G_0$ for $i,j\in\{1,2,\ldots,d\}$ and $i\neq
j$. This proves the theorem.
\end{proof}
The proof of the theorem is constructive. It gives us a way to compute
the elementary transvections from these generators of Albert and Thomson; one can use
them effectively to publish the public key. There will be some
precomputation involved to change the action of $\phi$ from these
generators to elementary transvections.
\section{Conclusions} This paper studies the MOR cryptosystem for
the special linear group over finite fields. Cryptography is primarily
driven by applicability. So it is natural to ask, how efficiently can one implement 
this MOR cryptosystem? How secure is the cryptosystem? I talked in details on both
these issues in Sections 8 and 7 respectively.
These are often hard questions to answer from a preliminary investigation. The worst
case complexity is often far off from the actual cost of
computation and security in itself is a very elusive concept.
We now offer some realistic expectations on the computational
cost of this MOR cryptosystem when $q=2^\gamma$.

From the small experiments we did, it seems reasonable to assume that
a randomly chosen element of SL$(d,q)$ is generated by approximately $d$ elementary
transvections, not $d^2$ elementary transvections. This story is also
corroborated by the proof of the previous theorem, where we show that
SL$(d,p)$ is generated by all transvections of the form
$1+e_{d,k}$, $k=1,2,\ldots, d-1$ and by Humphries~\cite{humphries}.

Then we need to
compute the image of these $d$ elementary transvections under the
automorphism $\phi$. For that we
need to split each elementary transvections into product of elementary
transvections over the ground field using Equation
\ref{slrelations2}. Then in the worst case we now have $\gamma d$
elementary transvections. But since in any random binary string of
length $\gamma$ on average there are utmost $\dfrac{\gamma}{2}$
ones. So a more realistic expectation of the number of transvections
is $\dfrac{\gamma}{2}d$. Using the same expectation as before the image
of these transvections under $\phi$ will be a string of
$\dfrac{\gamma}{2}d^2$ elementary transvections. Now if we use a
straight line program, i.e., use the elementary transvections to
multiply the one next to it to form the matrix, then the worst case
complexity will be $\dfrac{\gamma}{2}d^3$ field
multiplication. However, in reality that complexity will be
something like $\dfrac{\gamma}{2}d^\lambda$ where $2<\lambda\leq
3$. So it is safe to assume that in practice $\lambda$ will be
around $2.5$.

With all this understanding we can say that if $q$ is a field of
characteristic $2$ and degree $\gamma$, then composition of two
automorphisms require around
\[d^2+\dfrac{\gamma}{2}d^{2.5}\]
field multiplications. If we were working with a finite field
$\mathbb{F}_{q^d}$, then the naive product of two non-zero field element costs around
$d^2$ field multiplications. We are quite close to that. 

Lastly, I recommend that the plaintext should be an elementary
transvection. It is known that trace and determinant is invariant
under matrix conjugation. So the trace or
the determinant can give out information about the plaintext. However,
if it is an elementary transvection, then the trace is always $d$ and
the determinant $1$.
\begin{small}
\bibliography{paper}
\bibliographystyle{amsplain}
\end{small} 
\end{document}